\begin{document}

\newtheorem{definition}{Definition}
\newtheorem{lemma}{Lemma}
\newtheorem{theorem}{Theorem}
\newtheorem{example}{Example}
\newtheorem{proposition}{Proposition}
\newtheorem{remark}{Remark}
\newtheorem{assumption}{Assumption}
\newtheorem{corrolary}{Corrolary}
\newtheorem{property}{Property}
\newtheorem{ex}{EX}
\newtheorem{problem}{Problem}
\newcommand{\argmin}{\arg\!\min}
\newcommand{\argmax}{\arg\!\max}
\newcommand{\st}{\text{s.t.}}
\newcommand \dd[1]  { \,\textrm d{#1}  }

\title{\Large\bf Minimum Violation Control Synthesis on Cyber-Physical Systems under Attacks}

\author{Luyao Niu, Jie Fu and Andrew Clark %
\thanks{L. Niu, J. Fu and A. Clark are with the Department of Electrical and Computer Engineering, Worcester Polytechnic Institute, Worcester, MA 01609 USA.
{\tt\small \{lniu,jfu2,aclark\}@wpi.edu}}
	\thanks{This work was supported by NSF grant CNS-1656981.}
}
\thispagestyle{empty}
\pagestyle{empty}

\maketitle

\begin{abstract}

Cyber-physical systems are conducting increasingly complex tasks, which are often modeled using formal languages such as temporal logic. The system's ability to perform the required tasks can be curtailed by malicious adversaries that mount intelligent attacks. At present, however, synthesis in the presence of such attacks has received limited research attention. In particular, the problem of synthesizing a controller when the required specifications cannot be satisfied completely due to adversarial attacks has not been studied. In this paper, we focus on the minimum violation control synthesis problem under linear temporal logic constraints of a stochastic finite state discrete-time system with the presence of an adversary. A minimum violation control strategy is one that satisfies the most important tasks defined by the user while violating the less important ones. We model the interaction between the controller and adversary using a concurrent Stackelberg game and present a nonlinear programming problem to formulate and solve for the optimal control policy. To reduce the computation effort, we develop a heuristic algorithm that solves the problem efficiently and demonstrate our proposed approach using a numerical case study.
\end{abstract}

\section{Introduction}\label{sec: intro}

Cyber-physical systems have been identified to play important roles in multiple application domains such as health care systems, cloud computing, and smart homes. To model the increasingly complex tasks and corresponding desired system behaviors consistently, rigorously and compactly, temporal logics such as linear temporal logic (LTL) and computation tree logic (CTL) are adopted in recent literature. Typical system properties that can be modeled using LTL, whose syntax and semantics have been well developed, include liveness (e.g., `always eventually A'), reactivity (e.g., `if A, then B'), safety (e.g., `always not A') and so on.

Formal methods provide a class of theory and methods for controller design to satisfy given specifications modeled using temporal logics. Such control synthesis problems have been investigated in different applications such as robotic motion planning \cite{bhatia2010sampling,lahijanian2012temporal} and optimal control \cite{wolff2012robust,ding2014optimal}. However, these works normally explicitly or implicitly assume the existence of the controller, which is not always the case. 

In \cite{raman2012automated}, unsynthesizable controllers are characterized as either unsatisfiability or unrealizability. Unsatisfiability is caused by the incompatibility of the specifications given to the system, while unrealizability is caused by uncertainties and stochastic errors. Different from uncertainties and stochastic errors, malicious attacks can also cause unsynthesizable controllers in CPS. Malicious attacks on CPS raise the concern on CPS security since they can lead to misbehaviors and failures. For instance, power outage caused by attackers on power system \cite{CIAreport}, a false data injection (FDI) based attack CarShark on automobiles \cite{koscher2010experimental} and widely known Stuxnet on industrial control system (ICS) all caused significant economic losses and/or safety risks. 

The approaches proposed for analyzing uncertainties and stochastic errors are not applicable for analyzing malicious attacks on CPS. Moreover, uncertainties and stochastic errors are often viewed as identically and independently distributed random variables, which is not the case for malicious and strategic attacks. In the worst case, stochastic elements such as environment behavior are interpreted as malicious attacks on the system. Zero-sum game provides a good model for worst case analysis \cite{fudenberg1991game}. Meanwhile, failures returned by control synthesis framework could also be caused by malicious and strategic attacks such as jamming attack and Denial-of-Service (DoS) attack which are subject to different information pattern comparing to zero-sum game. In security domain, Stackelberg game is a more reasonable model \cite{tambe2011security,zhu2011stackelberg}, where player $1$ (always denoted as leader in the game) commits to its strategy first and player $2$ (always denoted as follower in the game) observes leader's strategy and then plays its best response. Stackelberg game can capture the information asymmetry and model the value of information.

In this paper, we consider a stochastic discrete-time system with the presence of an adversary, which is abstracted as a stochastic game (SG). The system is given a set of specifications modeled in LTL co-safe (scLTL). We focus on the scenario where no controller can be synthesized to satisfy the specifications simultaneously due to either incompatibility between specifications or the presence of the adversary. Thus we aim at the minimum violation control strategy synthesis problem, i.e., compute a control strategy that violates the less important specifications and satisfies the most important specifications based on user's preference \cite{tuumova2013minimum}. To the best of our knowledge, this is the first attempt to analyze minimum violation control synthesis on stochastic system in the presence of adversary. To summarize, we make the following contributions. We formulate a stochastic game to model the interaction between the controller and adversary. We give examples for typical attacks in CPS that can be incorporated into our proposed framework. To model limited observation capability of human adversary, anchoring bias is considered. We present the completion procedure to augment each automaton associated with each specification given to the system. We calculate the product SG using the completed automaton and SG. We formulate a nonlinear programming problem on the product SG to calculate the optimal control policy. A heuristic algorithm is proposed to compute an approximate solution. The proposed algorithm significantly saves computation cost and memory cost. The convergence of the algorithm is proved. A numerical case study is used to demonstrate the proposed approach. By using the proposed approach, more specifications can be satisfied when considering the presence of the adversary. Finally, we show the relationship between the controller's expected utility and the anchoring bias parameter of adversary. 

The remainder of this paper is organized as follows. Related work is presented in Section \ref{sec: related} and preliminary backgrounds are presented in Section \ref{sec: preliminary}. Section \ref{sec: formulation} presents problem formulation. We give solution method in Section \ref{sec: solution}. A numerical case study is given in Section \ref{sec: simulation}. We conclude this paper in Section \ref{sec: conclusion}.

\section{Related Work}\label{sec: related}

Control synthesis under temporal logic constraints normally assumes the specifications can be satisfied. Contributions on the cases when the specifications cannot be fulfilled can be classified into four categories. First, the minimum violation problem for deterministic system has been studied in \cite{tuumova2013minimum,tumova2016least, vasile2017minimum,chaudhari2014incremental}. Violations caused by confliction between specifications have been studied in \cite{tuumova2013minimum,tumova2016least, vasile2017minimum}, and a control strategy that satisfies the most important specifications is synthesized. In \cite{chaudhari2014incremental}, a two-player concurrent Stackelberg differential game is formulated. Quantitative preference over satisfactions of scLTL is investigated in \cite{lahijanian2015time}. However, contributions \cite{tuumova2013minimum,tumova2016least, vasile2017minimum,chaudhari2014incremental,lahijanian2015time} focus on deterministic systems and hence the proposed approaches are not applicable to stochastic systems. Second, unsynthesizable specifications are analyzed in \cite{raman2012automated}. Third, model repair problem is investigated so that satisfaction on specifications is guaranteed \cite{buccafurri1999enhancing,bartocci2011model}. Finally, specification revision problem is investigated in \cite{kim2015minimal}. Planning revision under temporal logic specification is investigated in \cite{guo2013reconfiguration}. However, none of the aforementioned papers consider the presence of adversary. Furthermore, non-deterministic automata are used in the aforementioned papers while deterministic automata are used in this paper.

Secure control in adversarial environment has been investigated using both control theoretic based approach \cite{shoukry2017secure} and game theoretic methods \cite{zhu2011stackelberg,zhu2015game}. When game theory meets temporal logic, turn-based two-player SG has been used to construct model checker \cite{chen2013prism} and model checking framework \cite{quatmann2016parameter, kattenbelt2010game}. The difference is that a general sum concurrent SG is considered in this paper. Secure control under LTL formula specification modeling liveness and safety constraints is considered in \cite{niu2018Secure}. The proposed approach in \cite{niu2018Secure} focuses on liveness and safety constraints, while this paper considers specifications modeled using scLTL.

\section{Preliminaries}\label{sec: preliminary}

In this section, we present backgrounds on linear temporal logic and stochastic games.

\subsection{Linear Temporal Logic (LTL)}

An LTL formula consists of a set of atomic propositions $\Pi$, boolean operators including negation ($\neg$), conjunction ($\land$) and disjunction ($\lor$) and temporal operators including next ($X$) and until ($\mathcal{U}$) \cite{baier2008principles}. An LTL formula is defined inductively as
\begin{equation*}
\phi=True\mid\pi\mid\neg\phi\mid\phi_1\land\phi_2\mid X\phi\mid\phi_1~\mathcal{U}~\phi_2.
\end{equation*}
Other operators such as implication ($\implies$), eventually ($F$) and always ($G$) can be defined using operators above. In particular, $\phi\implies\psi$ is equivalent to $\neg\phi\lor\psi$, $F \phi$ is equivalent to $True~\mathcal{U}~\phi$, and $G\phi$ is equivalent to $\neg F\neg\phi$.

The semantics of LTL formulas are defined over infinite words in $2^{\Pi}$. Informally speaking, $G\phi$ is true if and only if $\phi$ is true for the current time step and all future time. $F\phi$ is true if and only if $\phi$ is true at some future time. $X\phi$ is true if and only if $\phi$ is true in the next time step. A word $\eta$ satisfying an LTL formula $\phi$ is denoted as $\eta\models\phi$.

In this paper, we focus on syntactically co-safe LTL (scLTL) formulas.

\begin{definition}\label{def: scLTL}
\emph{(scLTL \cite{kupferman2001model}):} Any string that satisfies a scLTL formula consists of a finite string (a good prefix) followed by any infinite continuation. This continuation does not affect the formula's truth value.
\end{definition}
By Definition \ref{def: scLTL}, a word $\eta$ satisfies an scLTL formula $\phi$ if it contains a good prefix $\eta_0\eta_1\cdots\eta_n$ such that $\eta_0\eta_1\cdots\eta_n\eta_{n+1}\eta_{n+2}\cdots\models\phi$ for any suffix $\eta_{n+1}\eta_{n+2}\cdots$.  

For each scLTL formula, a deterministic finite automaton (DFA) can be obtained. A DFA is defined as follows.
\begin{definition}\label{def: DFA}
\emph{(Deterministic finite automaton):} A DFA $\mathcal{A}$ is a tuple $\mathcal{A}=(Q,q_0,\Sigma,\delta,F)$, where $Q$ is a finite set of states, $q_0\in Q$ is the initial state, $\Sigma$ is alphabet, $\delta: Q\times\Sigma\rightarrow Q$ is the set of transitions and $F\subseteq Q$ is the set of accepting states. 
\end{definition}

A run on a DFA $\mathcal{A}$ over a finite input word $\sigma = \sigma_0\sigma_1\cdots\sigma_n$ is a sequence of states $Q^*=q_0q_1\cdots q_n$ such that $\delta(q_{k-1},\sigma_k)=q_k$ for all $0\leq k\leq n$. A run is accepting if $q_n\in F$. The satisfaction of a formula $\phi$ by a run $\sigma$ is denoted as $\sigma\models\phi$. To enable violations on specifications, we assume any DFA $\mathcal{A}$ is complete, i.e., for any $q\in Q$ and $\sigma\in\Sigma$, $\delta(q,\sigma)$ is defined. The completion procedure can be achieved by adding an additional $sink$ state and let $\delta(q,\sigma)=sink$ if $\delta(q,\sigma)$ is undefined.




\subsection{Stochastic Game (SG)}
A Stochastic Game (SG) is defined as follows.
\begin{definition}\label{def: SG}
\emph{(Stochastic Game):} A stochastic game $\mathcal{G}$ is a tuple $\mathcal{G}= (S,U_C,U_A,Pr,\Pi,\mathcal{L})$, where $S$ is a finite set of states, $U_C$ is a set of actions of the controller, $U_A$ is a set of actions of an adversary, $Pr: S\times U_C\times U_A\times S\rightarrow [0, 1]$ is a transition function where $Pr(s, u_C, u_A, s^{\prime})$ is the probability of a transition from state $s$ to state $s^\prime$ when the controller takes action $u_C$ and the adversary takes action $u_A$. $\Pi$ is a set of atomic propositions. $\mathcal{L}:S\rightarrow 2^{\Pi}$ is a labeling function mapping each state to a subset of propositions in $\Pi$.
\end{definition}
Denote the admissible actions as the set of actions available to the controller (resp. adversary) at each state $s\in S$ as $U_C(s)$ (resp. $U_A(s)$). A finite (resp. infinite) path on SG $\mathcal{G}$ is a finite (resp. infinite) sequence of states denoted as $Path_{fin} = s_0s_1\cdots s_n$(resp. $Path_{inf} = s_0s_1s_2\cdots$). Let $Path$ be the set of finite paths. A control policy $\mu:Path\times U_C\rightarrow \mathbb{R}$ (resp. adversary policy $\lambda:Path\times U_A\rightarrow \mathbb{R}$) is a function specifying the probability distribution over control (resp. attack) actions given historical trajectory $Path_{fin}$. An admissible policy is the policy whose support is the set of admissible actions at each state. In particular, we consider a \emph{memoryless} control(resp. adversary) policy in this paper, i.e., $\mu$ (resp. $\lambda$) depends only on the current state. 

Stackelberg SG is a widely adopted model in security domain \cite{tambe2011security}. In the Stackelberg setting, one player is the leader and another player is the follower. The leader first commits to a strategy $\mu$. The follower then observes the strategy $\mu$ and play its best response $\lambda$. Given any control policy $\mu$, the best response from the adversary is represented as $\mathcal{BR}(\mu)=\{\lambda|\lambda=\argmax_{\lambda}\mathcal{T}_A(\mu,\lambda)\}$, where $\mathcal{T}_A(\mu,\lambda)$ is the adversary's utility given a pair of leader-follower strategies. The Stackelberg equilibrium is defined as follows.
\begin{definition}\label{def: Stackelberg equilibrium}
\emph{(Stackelberg Equilibrium (SE)):} Denote the utility that the leader (resp. follower) gains in a stochastic game $\mathcal{G}$ under leader follower strategy pair $(\mu,\lambda)$ as $\mathcal{T}_C(\mu,\tau)$ (resp. $\mathcal{T}_A(\mu,\tau)$). A pair of leader follower strategy $(\mu,\lambda)$ is an SE if leader's strategy $\mu$ is optimal given that the follower observes its strategy and plays its best response, i.e., $\mu\in\argmax_{\mu^\prime\in\boldsymbol{\mu}}~ \mathcal{T}_C(\mu^\prime,\mathcal{BR}(\mu^\prime))$,
where $\boldsymbol{\mu}$ is the set of all admissible policies of the controller and $\lambda\in\mathcal{BR}(\mu^\prime)$ denotes the best response to the leader's strategy $\mu^\prime$ from the follower.
\end{definition}

In Stackelberg games with human adversaries, \emph{anchoring bias} is used to model the confidence of the adversary in its observations on $\mu$ \cite{fox2003partition}. When considering anchoring bias, the response $\lambda$ might not be the best response to control policy $\mu$. Human adversaries normally assign uniform probability to the control action at each state \cite{fox2003partition}. When more information is obtained via observation, adversaries slowly update the distributions. In this paper a linear model is adopted to represent the estimated probability. In this model, the estimated probability of human adversary that the controller takes action $u_C$ at each state is calculated as 
\begin{equation}\label{eq: anchoring bias}
\tilde{\mu}(s,u_C)=\alpha\frac{1}{|U_C(s)|} + (1-\alpha)\mu(s,u_C),~\forall s, u_C
\end{equation}
where $\alpha\in[0,1]$ is a parameter to tune the balance between the original and true probability. When $\alpha = 0$, the estimated probability becomes the true probability and thus the adversary plays its best response. When $\alpha=1$, then the estimated probability becomes the uniform distribution, implying the adversary has no capability to observe or infer the control policy based on his observation.

\section{Problem Formulation}\label{sec: formulation}

In this section, we first present the problem formulation. Then we show that several typical CPS security problems can be analyzed using the proposed framework. We consider a finite-state discrete-time system in the presence of an adversary, which can be abstracted using a SG $\mathcal{G}_0=(S,U_C,U_A,Pr,\Pi,\mathcal{L})$ as defined in Definition \ref{def: SG}.


We adopt the concurrent Stackelberg setting. In particular, the controller acts as the leader and the adversary is the follower. The controller first commits to its strategy (or control policy) $\mu$. Then the adversary, who observes the historical behavior of the controller, plays its response $\lambda$ to the control policy $\mu$. We assume that both the controller and adversary can observe current state $s$. At each system state $s$, both the controller and adversary have to take actions simultaneously and the system evolves to state $s^\prime$ following transition function defined in Definition \ref{def: SG}. 

The system is assigned a set of specifications $\Phi=\{\phi_1,\phi_2,\cdots,\phi_n\}$ modeled using scLTL \cite{tumova2016least,vasile2017minimum}. By satisfying each specification $\phi_i\in\Phi$, the controller gains a reward $r(\phi_i)$. The objective of the controller is to maximize the total reward obtained via satisfying specifications. In the worst case, the adversary attempts to deviate system behavior and drive the system to violate specifications in $\Phi$ so as to minimize the total reward obtained by system. Hence, the specifications cannot be satisfied simultaneously due to either incompatibility of specifications or the presence of adversary. Thus we investigate the minimum violation problem on such a system as follows.
\begin{problem}\label{prob: unrealizability problem}
Given an SG $\mathcal{G}_0$ abstracted from the system in the presence of an adversary and a set of specifications $\Phi=\{\phi_1,\cdots,\phi_n\}$ that potentially cannot be satisfied by system simultaneously, with each $\phi_i\in\Phi$ associated with a reward function $r(\phi_i)$, compute a control policy $\mu$ such that $\mu$ and the best response from adversary $\lambda\in\mathcal{BR}(\tilde{\mu})$ constitutes SE defined in Definition \ref{def: Stackelberg equilibrium}.
\end{problem}

In the following, we show several problems in security domain can be formulated using our proposed framework.

\subsubsection{Patrolling Security Game with single type of adversary \cite{basilico2012patrolling}}\label{ex: security game}


The states $S$ are set as locations in PSG. The actions $U_C$ and $U_A$ are the actions available to the patrol unit and adversary, respectively. In particular, $U_C$ includes the actions that transit the patrol unit among the locations, while $U_A$ are the intrusion actions modeling which location is targeted by the adversary. The transition probability captures the transition uncertainty. The actions taken by both players jointly determine their utilities. For instance, the adversary wins if the target region is under attack without protection and the patrol unit wins otherwise. 

The interaction between the patrol unit and adversary is modeled as a Stackelberg game. The security force is the leader while the adversary is the follower. The adversary can observe the schedule of security force (by waiting outside the environment indefinitely) and play its best response.

By using our proposed framework, task dependent rewards can be defined and thus more complex behaviors of the patrolling unit can be considered. For example, the patrolling unit can be given the following tasks: visit areas in sequence (e.g., `First region A then region B then region C': $F~(A\land (F~B\land F~C))$) and reactivity (e.g., `if some passenger enters prohibited region, stop them': $prohibited\implies stop$).

\subsubsection{Jamming Attacks on CPS}\label{ex: jamming attack}
Applications such as SCADA networks and remotely controlled UAVs can be modeled as CPS where the controller communicates with the plant via a wireless network corrupted by a strategic jamming attacker.

Let the state of the plant evolves following a finite state discrete-time dynamics $x(k+1)=Ax(k)+Bu(k)+\omega(k),~k=0,1\cdots$, where $x(k)$ is the system state, $u(k)=\Gamma(u_C(k),u_A(k))$ is the system input jointly determined by the control signal $u_C(k)$ and adversary signal $u_A(k)$ for all $k$ and $\omega(k)$ is stochastic disturbance. Function $\Gamma(u_C(k),u_A(k))$ can be formulated as: (i)
$\Gamma(u_C(k),u_A(k))=u_A(k)\cdot u_C(k),~ u_A(k)\in\{0,1\},~\forall k$ \cite{gupta2010optimal}, or (ii) $\Gamma(u_C(k),u_A(k))=u_A(k)+u_C(k),~\forall k$ \cite{li2007optimal}. The formulation of (i) models scenario where the adversary can cause collision at the receiver equipped on the plant and result in denial-of service (DoS) attack. The formulation in (ii) models the scenario where the adversary can flip several bits in the packet and result in false information at the plant. Note that when the adversary launches DoS attack, the actuator can generate no input for the plant as $u(k)=0$ when $u_A(k)=0$ \cite{gupta2010optimal}, or $u(k)=u(k-1)$ when $u_A(k)=0$.

Consider the example of an autonomous UAV. The reachability specification can be given to the UAV as `eventually reach target region and avoid obstacles', i.e, $G\neg~obstacle~\land~ F~target$. 


\section{Proposed Solution for Problem \ref{prob: unrealizability problem}}\label{sec: solution}

In this section, we first present a mixed integer non-linear programming (MINLP) formulation. Then we propose a heuristic solution to compute a \emph{proper} stationary control policy, which will be defined later. 

For each specification $\phi_i$, a complete DFA $\mathcal{A}_i=(Q^i,q_0^i,\Sigma,\delta^i,F^i)$ can be constructed. Given the set of complete automata $\mathcal{A} = \{\mathcal{A}_1,\cdots,\mathcal{A}_n\}$ with each $\mathcal{A}_i$ associated with $\phi_i$, we can construct a product automaton using the following definition \cite{baier2008principles}.
\begin{definition}\label{def: product automaton}
\emph{(Product automaton):} A product automaton obtained from $\mathcal{A}$ is a tuple $\mathcal{A}_\mathcal{P}=(Q_\mathcal{P},q_{0,\mathcal{P}},\Sigma,\delta_\mathcal{P},F_\mathcal{P})$, where $Q_\mathcal{P}=Q^1\times\cdots\times Q^n$ is a finite set of states, $q_{0,\mathcal{P}}=(q_0^1,\cdots,q_0^n)$ is the initial state, $\Sigma$ is the alphabet inherited from $\mathcal{A}$, $\delta_\mathcal{P}=\left((q^1,\cdots,q^n),\sigma,({q^{1}}^\prime,\cdots,{q^{n}}^\prime)\right)$ if $\delta^i(q^i,\sigma)={q^i}^\prime$ for all $i\}$ and $F_\mathcal{P}=\left\{(q^1,\cdots,q^n)|q^i\in F^i,~\forall i\right\}$ is the set of accepting states.
\end{definition}

Given the SG $\mathcal{G}_0$ and product automaton $\mathcal{A}_\mathcal{P}$, we can construct a product SG $\mathcal{G}$ defined as follows.
\begin{definition}\label{def: product SG}
\emph{(Product SG):} Given SG $\mathcal{G}_0 = (S,U_C,U_A,Pr,\mathcal{L},\Pi)$ and product automaton $\mathcal{A}_\mathcal{P}=(Q_\mathcal{P},q_{0,\mathcal{P}},\Sigma,\delta_\mathcal{P},F_\mathcal{P})$, a (weighted and labeled) product SG is a tuple $\mathcal{G}=(S_\mathcal{P},U_C,U_A,Pr_\mathcal{P},Acc,W)$, where $S_\mathcal{P}=S\times Q_\mathcal{P}$ is a finite set of states, $U_C$ (resp. $U_A$) is a finite set of control inputs (resp. attack signals), $Pr_\mathcal{P}((s,q^1,\cdots,q^n),u_C,u_A,(s^\prime,{q^1}^\prime,\cdots,{q^n}^\prime))=Pr(s,u_C,u_A,s^\prime)$ if $((q^1,\cdots,q^n),\mathcal{L}(s^\prime),({q^1}^\prime,\cdots,{q^n}^\prime))\in\delta_\mathcal{P}$, $Acc = S\times F_\mathcal{P}$, and $W$ is a weight function assigning each transition a reward.
\end{definition}
The weight function of product SG $\mathcal{G}$ is defined as
\begin{equation}\label{eq: weight function}
W((s,q^1,\cdots,q^n),u_C,u_A,(s^\prime,{q^1}^\prime,\cdots,{q^n}^\prime))\\
=\sum_{i=1}^n I_{ii^\prime}r(\phi_i),
\end{equation}
where the indicator $I_{ii^\prime} = 1$ if $q^i\notin F^i$ and ${q^i}^\prime\in F^i$ and $I_{ii^\prime} = 0$ otherwise. By the definition \eqref{eq: weight function}, we have that a trace $\tau$ collects reward by satisfying specifications if a specification is satisfied at first time. We index the states in the product SG $\mathcal{G}$ as $s_\mathcal{P}$. 

A proper control policy on product SG is defined as follows.
\begin{definition}\label{def: proper policy}
\emph{(Proper Policies):} A stationary control policy $\mu$ is proper if under $\mu$, regardless of the policy chosen by the adversary, the set of destination states can eventually be reached with positive probability, where a destination state $s_\mathcal{P}=(s,q^1,\cdots,q^n)$ is a state such that $q^i$ is an absorbing state in automaton $\mathcal{A}_i$ for all $i$.  
\end{definition}
If a control policy $\mu^\prime$ is improper, then under policy $\mu^\prime$, there exists some state $s_\mathcal{P}$ that has zero probability to reach the set of destination states.
\subsection{MINLP Formulation}

For the controller's strategy, since randomized stationary strategies are considered in Problem \ref{prob: unrealizability problem}, we have that 
\begin{align}
&\mu(s_\mathcal{P},u_C)\geq 0,~\forall s_\mathcal{P}\in S_\mathcal{P}, u_C\in U_C,\label{eq: constraint 1}\\
&\sum_{u_C\in U_C(s_\mathcal{P})}\mu(s_\mathcal{P},u_C)=1,~\forall s_\mathcal{P}\in S_\mathcal{P},\label{eq: constraint 2}\\
&\lambda(s_\mathcal{P},u_A)\in\{0,1\},~\forall s_\mathcal{P}\in S_\mathcal{P}, u_A\in U_A,\label{eq: constraint 3}\\
&\sum_{u_A\in U_A(s_\mathcal{P})}\lambda(s_\mathcal{P},u_A)=1,~\forall s_\mathcal{P}\in S_\mathcal{P},\label{eq: constraint 4}
\end{align}
where \eqref{eq: constraint 2} and \eqref{eq: constraint 4} guarantees that the probability distribution sums to one. Eq. \eqref{eq: constraint 3} holds since in Stackelberg games, it is sufficient to consider pure strategies for the follower \cite{vorobeychik2012computing}.



The value function for the controller $V_C(s_\mathcal{P})$ (resp. adversary $V_A(s_\mathcal{P})$) is defined as the expected reward for the controller (resp. adversary) starting from state $s_\mathcal{P}$. The value functions can be characterized using the following lemma.
\begin{lemma}\label{lemma: value bound}
The expected reward of the controller and adversary induced by policy $\mu$ and $\lambda$ can be represented as
\begin{align*}
&V_C(s_\mathcal{P})=\sum_{u_C\in U_C}\big[\mu(s_\mathcal{P},u_C)\sum_{u_A\in U_A}\lambda(s_\mathcal{P},u_A)\nonumber\\
&~\sum_{s_\mathcal{P}^\prime}Pr_\mathcal{P}(s_\mathcal{P},u_C,u_A,s_\mathcal{P}^\prime)(W(s_\mathcal{P},u_C,u_A,s_\mathcal{P}^\prime)+V_C(s_\mathcal{P}^\prime))\big],\\    
&V_A(s_\mathcal{P})=\sum_{u_C\in U_C}\big[\tilde{\mu}(s_\mathcal{P},u_C)\sum_{u_A\in U_A}\lambda(s_\mathcal{P},u_A)\nonumber\\
&~\sum_{s_\mathcal{P}^\prime}Pr_\mathcal{P}(s_\mathcal{P},u_C,u_A,s_\mathcal{P}^\prime)(-W(s_\mathcal{P},u_C,u_A,s_\mathcal{P}^\prime)+V_A(s_\mathcal{P}^\prime))\big].
\end{align*}
Moreover, given a pair of policies $\mu$ and $\lambda$, the expected reward of the controller and adversary are the unique solutions to the linear equations above.
\end{lemma}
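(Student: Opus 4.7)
The plan is a standard dynamic-programming derivation followed by a linear-algebraic uniqueness argument. To derive the first equation, I would start from the trajectory-level definition
\begin{equation*}
V_C(s_\mathcal{P}) = \mathbb{E}\Big[\sum_{t=0}^\infty W(s_t, u_{C,t}, u_{A,t}, s_{t+1}) \,\Big|\, s_0 = s_\mathcal{P}\Big]
\end{equation*}
(the expectation being taken under the Markov chain induced by $\mu$ and $\lambda$), and apply the tower property, splitting off the $t = 0$ contribution and recognizing the remainder---conditioned on $s_1 = s_\mathcal{P}'$---as $V_C(s_\mathcal{P}')$. This is legitimate because $\mu$ and $\lambda$ are memoryless and $Pr_\mathcal{P}$ is Markovian. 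Averaging over the joint action distribution $\mu(s_\mathcal{P},\cdot)\,\lambda(s_\mathcal{P},\cdot)$ and over next states according to $Pr_\mathcal{P}$ yields exactly the displayed equation. The infinite sum converges absolutely since, by \eqref{eq: weight function}, each reward $r(\phi_i)$ is collected at most once per trace (via the indicator that $q^i$ first enters $F^i$), so the total reward is bounded by $\sum_i r(\phi_i)$ and the tower/Fubini manipulations are justified. The equation for $V_A$ follows by the identical argument with two changes dictated by the adversary's model: the stage reward becomes $-W$, and $\mu$ is replaced by the perceived policy $\tilde\mu$ of \eqref{eq: anchoring bias}, since the adversary reasons with its biased estimate of the controller's strategy.

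For the uniqueness claim, I would rewrite both Bellman equations compactly as $V = R + P\,V$, where
\begin{equation*}
P_{s_\mathcal{P}, s_\mathcal{P}'} = \sum_{u_C, u_A}\pi(s_\mathcal{P}, u_C)\,\lambda(s_\mathcal{P}, u_A)\,Pr_\mathcal{P}(s_\mathcal{P}, u_C, u_A, s_\mathcal{P}')
\end{equation*}
with $\pi = \mu$ for the controller's system and $\pi = \tilde\mu$ for the adversary's, and $R$ collects the corresponding expected one-step rewards. Let $D \subseteq S_\mathcal{P}$ be the set of destination states of Definition \ref{def: proper policy}. By \eqref{eq: weight function}, no reward is accrued from any state of $D$ and $D$ is absorbing in the automaton coordinates, so the Bellman equations force $V_C = V_A = 0$ on $D$. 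Restricting to $S_\mathcal{P}\setminus D$ then gives $V = R + P_{ND}\,V$, with $P_{ND}$ the sub-stochastic block indexed by non-destination states. Since $\mu$ is proper, $D$ is reached with positive probability from every state of $S_\mathcal{P}\setminus D$ uniformly over $\lambda$; combined with finiteness of $S_\mathcal{P}$ and $U_A$, this yields a uniform horizon $N$ and a constant $\rho < 1$ with $\|P_{ND}^{N}\|_\infty \le \rho$. Hence the spectral radius of $P_{ND}$ is strictly less than one, $(I - P_{ND})$ is invertible, and $V = (I - P_{ND})^{-1} R$ is the unique solution on $S_\mathcal{P}\setminus D$; concatenated with zero on $D$ this gives the unique solution to the full system.

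The main obstacle I expect is the uniform-reachability step. Propriety in Definition \ref{def: proper policy} is an $\forall\lambda$ condition, and one must be careful to argue that the hitting-probability bound $\rho$ can actually be chosen independent of the adversary's response. I would resolve this by invoking finiteness of $U_A$ and $S_\mathcal{P}$: the infimum over $\lambda$ of the probability of reaching $D$ within $|S_\mathcal{P}\setminus D|$ steps is attained and strictly positive whenever $\mu$ is proper. A small additional remark covers the adversary's value function: because \eqref{eq: anchoring bias} expresses $\tilde\mu$ as a convex combination of $\mu$ with the uniform distribution on $U_C(s_\mathcal{P})$, the support of $\tilde\mu$ contains that of $\mu$, so $\tilde\mu$ is at least as proper as $\mu$ and the same invertibility argument applies verbatim to $V_A$.
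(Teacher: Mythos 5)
Your derivation of the two fixed-point equations is essentially the paper's argument made explicit: the paper simply observes that fixing $(\mu,\lambda)$ reduces the game to a Markov chain and cites the stochastic-shortest-path dynamic programming equations, whereas you derive the same recursion from the trajectory-level definition via the tower property and justify convergence by noting that each $r(\phi_i)$ is collected at most once along any trace. The real divergence is on the uniqueness claim, which the paper does not actually prove (it is absorbed into the citation); your sub-stochastic-block and spectral-radius argument is the right way to supply it, and your observations that propriety of $\mu$ must be invoked (the lemma's statement is silent on this) and that $\tilde\mu$ inherits propriety from $\mu$ because its support contains that of $\mu$ are both correct and worth having.

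One step does not hold as written: ``the Bellman equations force $V_C=V_A=0$ on $D$.'' On the destination set the one-step reward vanishes and $D$ is closed under transitions (the automaton coordinates are absorbing while the physical state keeps moving), so the restricted equation reads $V|_D = P_D\,V|_D$ with $P_D$ stochastic; its solution set is the space of harmonic functions of $P_D$, which contains every constant vector, so the full linear system over $S_\mathcal{P}$ is underdetermined and the equations alone do not pin $V$ to zero on $D$. The fix is standard: take $V|_D=0$ as part of the definition of the value function (no further reward is collectible from $D$), and assert uniqueness for the system restricted to $S_\mathcal{P}\setminus D$, where your invertibility argument for $I-P_{ND}$ goes through. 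As stated, the lemma's unqualified uniqueness claim needs this same caveat, so you have surfaced an imprecision in the statement rather than introduced a new error---but your writeup should present $V=0$ on $D$ as a boundary condition, not as a consequence of the equations.
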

\begin{proof}
The expected reward starting from state $s_\mathcal{P}$ is calculated as
\begin{multline*}
\tilde{W}(s_\mathcal{P})= \sum_{u_C\in U_C}\mu(s_\mathcal{P},u_C)\sum_{u_A\in U_A}\lambda(s_\mathcal{P},u_A)\\
\sum_{s_\mathcal{P}^\prime}Pr_\mathcal{P}(s_\mathcal{P},u_C,u_A,s_\mathcal{P}^\prime)W(s_\mathcal{P},u_C,u_A,s_\mathcal{P}^\prime).   
\end{multline*}

Given a pair of policies $\mu$ and $\lambda$, the stochastic game reduces to a Markov chain. The expected reward $V_C(s_\mathcal{P})$ is then viewed as the expected reward collected by the path starting from $s_\mathcal{P}$ to the set of destination states, which is equivalent to the shortest path problem on the induced Markov chain. By the dynamic programming algorithm of stochastic shortest path problem on Markov chain, we have\cite{bertsekas1995dynamic}
\begin{multline}\label{eq: VC Bellman}
V_C(s_\mathcal{P})= \tilde{W}(s_\mathcal{P})\\
+\sum_{s_\mathcal{P}^\prime}Pr_\mathcal{P}(s_\mathcal{P},u_C,u_A,s_\mathcal{P}^\prime)V_C(s_\mathcal{P}^\prime).  
\end{multline}
Then by the definition of $\tilde{W}$, we have that Lemma \ref{lemma: value bound} holds.
\end{proof}
Based on Lemma \ref{lemma: value bound}, we have the following proposition which gives the sufficiency of considering proper policies.

\begin{proposition}\label{prop: sufficiency of proper policy}
If a proper control policy $\mu^\prime$ is associated with the highest expected reward for the controller among all proper policies, then it associates with the highest expected reward among all stationary policies.
\end{proposition}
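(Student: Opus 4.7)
My plan is to prove the proposition by showing that every stationary policy $\hat\mu$ is dominated in expected reward by some proper policy $\mu''$; once this is established, optimality of $\mu'$ among proper policies immediately yields $V_C^{\mu'}(s_{0,\mathcal{P}}) \ge V_C^{\mu''}(s_{0,\mathcal{P}}) \ge V_C^{\hat\mu}(s_{0,\mathcal{P}})$ for every stationary $\hat\mu$, which is the statement of the proposition.

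First I would analyze what happens under an improper policy $\hat\mu$. Let $T \subseteq S_\mathcal{P}$ denote the set of states from which the set of destination states cannot be reached with positive probability under $\hat\mu$ paired with its Stackelberg best response. The crucial observation is that in every complete scLTL DFA $\mathcal{A}_i$ both the accepting set $F^i$ and the $sink$ state introduced in the completion procedure are absorbing, so once a trajectory is confined to $T$ no automaton component can execute a transition of the form $q^i \notin F^i \to {q^{i}}^\prime \in F^i$. Hence, by the reward definition in \eqref{eq: weight function}, no additional first-time acceptance reward is accrued after entering $T$; every reward collected along a $T$-absorbed trajectory is already gathered in the prefix before it hits $T$.

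Next I would construct $\mu''$ by splicing $\hat\mu$ with any fixed proper policy $\mu_0$ on $T$: set $\mu''(s_\mathcal{P},\cdot)=\hat\mu(s_\mathcal{P},\cdot)$ for $s_\mathcal{P}\notin T$ and $\mu''(s_\mathcal{P},\cdot)=\mu_0(s_\mathcal{P},\cdot)$ for $s_\mathcal{P}\in T$. By construction $\mu''$ inherits reachability of destinations on $T$ from $\mu_0$, and reaches destinations on $S_\mathcal{P}\setminus T$ by the definition of $T$, so $\mu''$ is proper. Using the Bellman characterization of $V_C$ from Lemma \ref{lemma: value bound} I would compare $V_C^{\mu''}$ with $V_C^{\hat\mu}$ state by state: outside $T$ the one-step dynamics are identical, while inside $T$ the value under $\mu''$ is no smaller than under $\hat\mu$ because the latter contributes no further reward whereas $\mu''$ can still collect outstanding rewards $r(\phi_i)\ge 0$ by steering back toward accepting states.

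The hard part will be reconciling this construction with the Stackelberg information pattern: modifying $\hat\mu$ on $T$ shifts the anchored estimate $\tilde\mu''$ and hence the adversary's best response $\mathcal{BR}(\tilde\mu'')$, which could in principle redirect attacks away from $T$ and damage the controller on $S_\mathcal{P}\setminus T$. I plan to address this by invoking monotonicity of the Bellman operator implicit in Lemma \ref{lemma: value bound} together with non-negativity of the rewards, arguing that any additional gain the adversary extracts via the shift in best response is upper-bounded by what it was already extracting by trapping trajectories in $T$ under $\hat\mu$. Combining the trap-set analysis with this leader--follower argument is the principal obstacle; the remainder of the proof follows routinely once this monotone domination $V_C^{\mu''}\ge V_C^{\hat\mu}$ is in hand.
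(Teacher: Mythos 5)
Your proposal takes essentially the same route as the paper's own proof: partition the state space into the trap set (states that cannot reach the destination set) and its complement, keep the improper policy on the complement, splice in an arbitrary proper policy on the trap set, and argue the resulting proper policy loses no reward because the weight function in \eqref{eq: weight function} awards nothing once a trajectory is confined to the trap set. The Stackelberg complication you flag as the ``hard part''---that modifying $\mu$ on the trap set perturbs $\tilde\mu$ and hence the adversary's globally computed best response, potentially degrading the controller's value outside the trap set---is a genuine subtlety, but the paper's proof does not resolve it either; it simply asserts that the spliced policy retains the highest expected reward on $S_2$, so on this point your attempt is, if anything, more candid than the reference argument.
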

\begin{proof}
Let $\mu$ be the control policy that enables the controller receiving highest reward among all stationary policies. If $\mu$ is a proper policy, then the result clearly holds. Next, we focus on the scenario where $\mu$ is improper and show that by construction, we have a proper control policy $\mu^\prime$ such that the expected rewards for the controller under policy $\mu$ and $\mu^\prime$ are equal. Divide the set of states $S_\mathcal{P}$ into two subsets $S_1$ and $S_2$. In particular, let $S_1$ be the set of states that cannot reach the set of destination states, while $S_2$ denotes the set of states that reach the set of destination states with positive probability. Let $\mu^\prime=\mu$ for all $s_\mathcal{P}\in S_2$. By hypothesis on $\mu$, we have that the proper control policy $\mu^\prime$ corresponds to the highest expected reward when the initial state is in $S_2$. By the assumption on the existence of a proper policy $\tilde{\mu}$, we let $\mu^\prime=\tilde{\mu}$ for all $s_\mathcal{P}\in S_1$. Since $\mu$ is an improper policy while $\tilde{\mu}$ is a proper policy, we have that the expected reward received by the controller by committing to control policy $\mu^\prime$ is no less than committing to $\mu$. Hence, we have a proper control policy $\mu^\prime$ such that the controller receives expected reward no less than committing to improper policy $\mu$.  
\end{proof}

By Proposition \ref{prop: sufficiency of proper policy}, we can restrict the search space of control policy to the set of proper control policies. Denote the expected reward obtained by the controller starting from state $s_\mathcal{P}$ when the controller commits to strategy $\mu$ and adversary takes action $u_A$ as $B_C(s_\mathcal{P},\mu,u_A)$. Define $B_A(s_\mathcal{P},\tilde{\mu},u_A)$ for the adversary analogously. Then for all $s_\mathcal{P}\in S_\mathcal{P},u_A\in U_A$, the expected reward for the controller (resp. adversary) can be represented as
\begin{align}
&B_C(s_\mathcal{P},\mu,u_A)=\sum_{u_C\in U_C}\mu(s_\mathcal{P},u_C)\big[\sum_{s_\mathcal{P}^\prime}Pr_\mathcal{P}(s_\mathcal{P},u_C,u_A,s_\mathcal{P}^\prime)\nonumber\\
&\quad\quad\quad\quad(W(s_\mathcal{P},u_C,u_A,s_\mathcal{P}^\prime)+V_C(s_\mathcal{P}^\prime))\big],\label{eq: BC}\\
&B_A(s_\mathcal{P},\tilde{\mu},u_A)=\sum_{u_C\in U_C}\tilde{\mu}(s_\mathcal{P},u_C)\big[\sum_{s_\mathcal{P}^\prime}Pr_\mathcal{P}(s_\mathcal{P},u_C,u_A,s_\mathcal{P}^\prime)\nonumber\\
&\quad\quad\quad\quad(-W(s_\mathcal{P},u_C,u_A,s_\mathcal{P}^\prime)+V_A(s_\mathcal{P}^\prime))\big],\label{eq: BA} 
\end{align}
which are the expected utility of the controller and adversary, respectively. Note that the adversary's expected reward depends on its observation over the control policy $\tilde{\mu}$ defined in \eqref{eq: anchoring bias}. Since $\lambda$ is binary, we can bound the values for the adversary and controller using the big M method \cite{vorobeychik2012computing}, respectively, for all $s_\mathcal{P}$ and $u_A$ as follows: 
\begin{align}
&B_A(s_\mathcal{P},\tilde{\mu},u_A)\leq V_A(s_\mathcal{P})\leq B_A(s_\mathcal{P},\tilde{\mu},u_A)\nonumber\\
&\quad\quad\quad\quad\quad\quad\quad\quad\quad\quad\quad\quad+(1-\lambda(s_\mathcal{P},u_A))Z,\label{eq: constraint 5}\\
&V_C(s_\mathcal{P})\leq B_C(s_\mathcal{P},\mu,u_A)+(1-\lambda(s_\mathcal{P},u_A))Z,\label{eq: constraint 7}
\end{align}
where $Z$ is a sufficiently large positive number. Inequality \eqref{eq: constraint 5} and \eqref{eq: constraint 7} give bounds for $V_A(s_\mathcal{P})$ and $V_C(s_\mathcal{P})$. Depending on the value of $\lambda$, the upper bounds for $V_C(s_\mathcal{P})$ (resp. $V_A(s_\mathcal{P})$) can be either infinity ($\lambda(s_\mathcal{P},u_A)=1$) or $B_C(s_\mathcal{P},\mu,u_A)$ (resp. $B_A(s_\mathcal{P},\mu,u_A)$).

To compute the control policy that maximizes the expected utility of controller, the following optimization problem can be formulated \cite{vorobeychik2012computing}.
\begin{align}
\underset{\mu,\lambda,V_C,V_A}{\max}\quad&\gamma^TV_C\label{eq: MINLP}\\
\st\quad&\eqref{eq: anchoring bias}~\eqref{eq: constraint 1}~\eqref{eq: constraint 2}~\eqref{eq: constraint 3}~\eqref{eq: constraint 4}~\eqref{eq: constraint 5}~\text{and}~\eqref{eq: constraint 7}\nonumber
\end{align}
where
$\gamma$ is the initial distribution over state space $S_\mathcal{P}$. Since constraints \eqref{eq: constraint 5} and \eqref{eq: constraint 7} introduce nonlinearity and $\lambda$ is binary, the optimization problem \eqref{eq: MINLP} is an MINLP.

\subsection{Heuristic Solution}

The MINLP \eqref{eq: MINLP} is nonconvex and solving it is NP-hard. In the following, we present a value iteration based heuristic solution to the MINLP \eqref{eq: MINLP}.

\begin{center}
  	\begin{algorithm}[!htp]
  		\caption{Algorithm for computing a control strategy $\mu$ that maximizes the expected reward $V_C$.}
  		\label{alg: HVI}
  		\begin{algorithmic}[1]
  			\State Let $\mathcal{H}\leftarrow \{V_{C,1},\cdots,V_{C,t},\cdots,V_{C,|\mathcal{H}|}\}$, $\mathcal{V}\leftarrow\emptyset$,
  			\For {$V_{C,t}\in\mathcal{H}$}
  			\State $k\leftarrow 0$
  			\Repeat \State Solve MILP \eqref{eq: MILP} to obtain expected reward $V_C^{k}$.
  			\State $k \leftarrow k+1$
  			\Until $\gamma^TV_C^k-\gamma^TV_C^{k-1}\leq\epsilon$ or MILP \eqref{eq: MILP} is infeasible.
  			\If {$\gamma^TV_C^k-\gamma^TV_C^{k-1}\leq\epsilon$}
  			\State $\mathcal{V}\leftarrow \mathcal{V}\cup\{\gamma^T V_C^k\}$
  			\EndIf
  			\EndFor
  			\If{$\mathcal{V}=\emptyset$}
  			\State Return to step $2$
  			\Else 
  			\State $t^*\leftarrow\argmax\{V_{C,t}:t=1,2,\cdots,|\mathcal{H}|\}$
  			\State $\mu\leftarrow$ policy obtained from $V_{C,t^*}$
  			\State \Return $\mu$
  			\EndIf
  		\end{algorithmic}
    \end{algorithm}
\end{center}

As shown in Algorithm \ref{alg: HVI}, we first initialize an arbitrary set of initial policies using sampling approach, where the sample space is the product of $|S_\mathcal{P}|$ probability simplices in $\mathbb{R}^{|U_C|}$. Then by solving the optimal control problem from the perspective of adversary on the MDP induced by each control policy \cite{bhatia2010sampling,lahijanian2012temporal,wolff2012robust,ding2014optimal}, we can solve for a set of expected rewards for the controller $\mathcal{H}=\{V_{C,1},\cdots,V_{C,t},\cdots,V_{C,|\mathcal{H}|}\}$ associated with the initial policies. 

For each initial expected reward $V_{C,t}\in\mathcal{H}$, value iteration (line $3$ to line $7$) is used to find a control policy such that the objective function $\gamma^T V_C$ is maximized. In particular, at iteration $k+1$, given the expected reward obtained from previous iteration $V_C^{k}$, the following mixed integer linear programming (MILP) is solved to calculate the proper control policy $\mu^{k+1}$.
\begin{align}
\underset{\mu,\lambda,V_C,V_A}{\max}\quad&\gamma^TV_C\label{eq: MILP}\\
\st\quad&V_C(s_\mathcal{P})\leq B_C^k(s_\mathcal{P},\mu,u_A)\nonumber\\
&\quad\quad\quad+(1-\lambda(s_\mathcal{P},u_A)Z,~\forall s_\mathcal{P},u_A\nonumber\\
&B_A^k(s_\mathcal{P},\mu,u_A)\leq V_A(s_\mathcal{P})\nonumber\\
&\leq B_A^k(s_\mathcal{P},\mu,u_A)+(1-\lambda(s_\mathcal{P},u_A)Z,~\forall s_\mathcal{P},u_A\nonumber\\
&\eqref{eq: anchoring bias}~\eqref{eq: constraint 1}~\eqref{eq: constraint 2}~\eqref{eq: constraint 3}~\eqref{eq: constraint 4}\nonumber
\end{align}
where $B_C^k(s_\mathcal{P},\mu,u_A)$ and $B_A^k(s_\mathcal{P},\mu,u_A)$ are obtained by \eqref{eq: BC} and \eqref{eq: BA} using $V_C^k(s_\mathcal{P})$ and $V_A^k(s_\mathcal{P})$, respectively. Note that when solving the MILP \eqref{eq: MILP}, the policy chosen by the adversary is the best response to $\tilde{\mu}^k$ obtained from \eqref{eq: anchoring bias}. The algorithm terminates when either $V_C^k-V_C^{k-1}\leq\epsilon$ or the MILP \eqref{eq: MILP} is infeasible. The first termination condition focuses on the scenario where an optimal $V_C$ can be found by solving the optimization problem. Since the initial guess is given arbitrarily while $V_C$ is bounded within $[0,\sum_{\phi\in\Phi}r(\phi)]$, thus MILP \eqref{eq: MILP} might be infeasible. In this case, such an initial guess should be skipped and the value iteration module terminates. After a feasible $V_C$ is found at some iteration $t$, we store $V_C$ in vector $\mathcal{V}$. Then the control policy returned by Algorithm \ref{alg: HVI} is the control policy corresponding to the maximum value in $\mathcal{V}$. 

The convergence of Algorithm \ref{alg: HVI} is presented in the following theorem.
\begin{theorem}\label{theorem: convergence}
 Algorithm \ref{alg: HVI} converges in finite time.
\end{theorem}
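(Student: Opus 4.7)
The plan is to show that Algorithm~\ref{alg: HVI} terminates in finite time by bounding both its outer loop and each of its inner value-iteration loops. The outer loop (lines 2--11) iterates over the set $\mathcal{H}$, whose cardinality $|\mathcal{H}|$ is fixed a priori by the sampling step in line~1; thus the outer loop trivially performs only finitely many iterations, regardless of what happens inside.

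For the inner loop (lines 3--7), I would establish two facts about the sequence $\{\gamma^T V_C^k\}_{k\ge 0}$ generated by successively solving MILP~\eqref{eq: MILP}. First, the sequence is bounded above. Because the reward in the weight function~\eqref{eq: weight function} fires only on the unique transition that first moves the $i$-th component of the automaton state into $F^i$, the cumulative reward along any trajectory is at most $\sum_{\phi\in\Phi}r(\phi)$, which therefore also bounds every component of $V_C$. Second, the sequence is monotonically non-decreasing. To see this, I would argue that at iteration $k+1$ the previous iterate $(\mu^k,\lambda^k,V_C^k,V_A^k)$ is still feasible for MILP~\eqref{eq: MILP}: by the Bellman identity of Lemma~\ref{lemma: value bound}, $V_C^k(s_\mathcal{P}) = B_C^k(s_\mathcal{P},\mu^k,u_A^\star)$ for any $u_A^\star$ with $\lambda^k(s_\mathcal{P},u_A^\star)=1$, and the analogous equality holds for $V_A^k$; hence all four inequality constraints in~\eqref{eq: MILP} are satisfied by the previous iterate, and the new optimum must satisfy $\gamma^T V_C^{k+1}\ge \gamma^T V_C^k$. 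A monotonically non-decreasing sequence bounded above is Cauchy, so the stopping criterion $\gamma^T V_C^k-\gamma^T V_C^{k-1}\le\epsilon$ is met after finitely many iterations. If instead MILP~\eqref{eq: MILP} becomes infeasible at some iteration, the inner loop exits immediately, again in finite time.

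The main obstacle is the monotonicity argument, specifically verifying that $(\mu^k,\lambda^k)$ remains feasible once the coefficients $B_C^k,B_A^k$ are refreshed between iterations. The key point is that the values $V_C^k,V_A^k$ used to form $B_C^k,B_A^k$ at iteration $k+1$ are exactly those returned by the optimization at iteration $k$, so the Bellman-type identities of Lemma~\ref{lemma: value bound} align consistently with the constraints of MILP~\eqref{eq: MILP}; in particular the adversary's best-response inequality stays tight for the same $\lambda^k$ under the updated $B_A^k$. Combining the finite outer-loop length with the finite length of every inner loop yields the claimed finite-time termination of Algorithm~\ref{alg: HVI}.
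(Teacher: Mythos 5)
Your decomposition (finite outer loop over $\mathcal{H}$; bounded, monotone inner loop) matches the paper's, and your upper bound $\sum_{\phi\in\Phi}r(\phi)$ is exactly the one the paper uses. Where you genuinely diverge is the monotonicity step. The paper proves it with the operator machinery: it defines $T_\mu$ and $T$, shows $V_C^k=T_{\mu^k}V_C^k\le TV_C^k=T_{\mu^{k+1}}V_C^k$ via Proposition~\ref{prop: equality}, and then chains Lemma~\ref{lemma: monotonicity} (monotonicity of $T_\mu^k$) with Lemma~\ref{lemma: convergence} ($\lim_{M\to\infty}T_\mu^M V_C=V_C^{\mu,\lambda}$ for proper $\mu$) to get the componentwise inequality $V_C^k\le\lim_{m\to\infty}T_{\mu^{k+1}}^m V_C^k=V_C^{k+1}$ --- the classical policy-iteration argument. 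You instead argue that the previous iterate $(\mu^k,\lambda^k,V_C^k,V_A^k)$ stays feasible in the refreshed MILP~\eqref{eq: MILP}, so the optimal objective cannot decrease. Your route is more elementary and bypasses Lemmas~\ref{lemma: monotonicity} and~\ref{lemma: convergence} entirely, at the price of only yielding monotonicity of the scalar $\gamma^T V_C^k$ rather than of the whole vector (which is all the stopping rule needs). The one place you are too quick is the adversary's side of the feasibility check: tightness of the upper bound at the selected action $u_A^\star$ follows from the Bellman identity of Lemma~\ref{lemma: value bound}, but the lower-bound constraints $B_A^k(s_\mathcal{P},\mu^k,u_A)\le V_A^k(s_\mathcal{P})$ for the \emph{non-selected} actions $u_A$ require $\lambda^k$ to remain a best response against the refreshed continuation values $V_A^k$, not the values it was computed against. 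You name this as ``the main obstacle'' and then assert it; the paper absorbs exactly the same alignment into Proposition~\ref{prop: equality} and the definition of $T$, so neither proof fully discharges it, but in your write-up it should be stated as a hypothesis (e.g., that $V_A^k$ is the fixed-point value induced by $(\mu^k,\lambda^k)$ and $\lambda^k$ is greedy with respect to it) rather than claimed to follow automatically.
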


Before presenting the proof of Theorem \ref{theorem: convergence}, we first introduce two operators denoted as $T_\mu:\mathbb{R}^{|S_\mathcal{P}|}\rightarrow\mathbb{R}^{|S_\mathcal{P}|}$ and $T:\mathbb{R}^{|S_\mathcal{P}|}\rightarrow\mathbb{R}^{|S_\mathcal{P}|}$ as follows:
\begin{align}
&T_\mu V_C(s_\mathcal{P}) =\min_{\lambda\in\mathcal{BR}(\tilde{\mu})} \sum_{u_C\in U_C}\mu(s_\mathcal{P},u_C)\sum_{u_A\in U_A}\lambda(s_\mathcal{P},u_A)\nonumber\\
&~\sum_{s_\mathcal{P}^\prime}\big[Pr_\mathcal{P}(s_\mathcal{P},u_C,u_A,s_\mathcal{P}^\prime)(W(s_\mathcal{P},u_C,u_A,s_\mathcal{P}^\prime)+V_C(s_\mathcal{P}^\prime))\big],\label{eq: normal operator}\\    
&TV_C(s_\mathcal{P})=\max_\mu\min_{\lambda\in\mathcal{BR}(\tilde{\mu})}\sum_{u_C\in U_C}\mu(s_\mathcal{P},u_C)\sum_{u_A\in U_A}\lambda(s_\mathcal{P},u_A)\nonumber\\
&~\sum_{s_\mathcal{P}^\prime}\big[Pr_\mathcal{P}(s_\mathcal{P},u_C,u_A,s_\mathcal{P}^\prime)(W(s_\mathcal{P},u_C,u_A,s_\mathcal{P}^\prime)+V_C(s_\mathcal{P}^\prime))\big],\label{eq: optimal operator}    
\end{align}
The following lemmas characterizes the operator $T_\mu$.

\begin{lemma}\label{lemma: monotonicity}
For any vectors $V$ and $V^\prime$ such that $V\leq V^\prime$, we have $T_{\mu}^kV\leq T_\mu^kV^\prime$ for all policies $\mu$ and $k$, where $T_\mu^k(\cdot)$ iteratively applying $T_\mu$ operator $k$ times.
\end{lemma}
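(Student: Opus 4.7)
The plan is to argue by induction on $k$. The base case $k = 1$ is the standard one-step monotonicity of $T_\mu$, and the inductive step will then reduce back to the base case applied to the iterated value vectors.

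For the base case, I fix any policy $\mu$ and any pair of value vectors with $V \leq V'$ understood componentwise. The crucial observation is that the set $\mathcal{BR}(\tilde{\mu})$ over which the inner minimization in \eqref{eq: normal operator} is taken depends only on $\mu$ (through the biased estimate $\tilde{\mu}$ defined in \eqref{eq: anchoring bias}), and not on the value vector being transformed. Consequently, $T_\mu V(s_\mathcal{P})$ and $T_\mu V'(s_\mathcal{P})$ are both minima of the same parametric family of linear functionals in $\lambda$, over a common feasible set. Since $\mu(s_\mathcal{P}, u_C) \geq 0$, $\lambda(s_\mathcal{P}, u_A) \geq 0$, and $Pr_\mathcal{P}(s_\mathcal{P}, u_C, u_A, s_\mathcal{P}^\prime) \geq 0$ by definition, every coefficient multiplying an entry $V(s_\mathcal{P}^\prime)$ in the inner summation is non-negative. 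Hence, for each fixed feasible $\lambda$ and each state $s_\mathcal{P}$, the inner summation evaluated at $V$ is at most the same summation evaluated at $V'$. Since pointwise inequality of two functions of $\lambda$ is preserved under $\min_\lambda$ when taken over a common feasible set, I conclude $T_\mu V(s_\mathcal{P}) \leq T_\mu V'(s_\mathcal{P})$ for every $s_\mathcal{P}$.

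The inductive step is immediate: assuming $T_\mu^{k-1} V \leq T_\mu^{k-1} V'$ componentwise, I apply the base-case result to the two vectors $T_\mu^{k-1} V$ and $T_\mu^{k-1} V'$ to obtain $T_\mu(T_\mu^{k-1} V) \leq T_\mu(T_\mu^{k-1} V')$, i.e., $T_\mu^k V \leq T_\mu^k V'$.

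I do not anticipate any genuine obstacle. The only subtlety is confirming that the feasible set $\mathcal{BR}(\tilde{\mu})$ is unaffected when $V$ is replaced by $V'$, so that the standard "pointwise $\leq$ implies $\min \leq \min$" argument applies uniformly to both sides of the inequality; this is immediate from the definition of best response given in Section III.
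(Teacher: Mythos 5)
Your proof is correct, and it is structured differently from the paper's. The paper argues in one shot: it interprets $T_\mu^k$ as the total expected reward of a $k$-stage problem with fixed per-stage reward $\tilde{W}$ and terminal reward $V$, and observes that raising the terminal reward can only raise the total. You instead prove one-step monotonicity and then induct, and in the base case you make explicit the two facts the paper's interpretation leaves implicit: (i) every coefficient of $V(s_\mathcal{P}^\prime)$ in \eqref{eq: normal operator} is a product of probabilities and hence non-negative, and (ii) the minimization is over $\mathcal{BR}(\tilde{\mu})$, which depends on $\mu$ through \eqref{eq: anchoring bias} but not on the vector being transformed, so the pointwise inequality between the two linear functionals of $\lambda$ survives the $\min$. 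Point (ii) is the one place where the generic ``linear map with non-negative coefficients is monotone'' reasoning genuinely needs an extra observation, and the paper's $k$-stage interpretation glosses over it; your version is therefore the more careful of the two, at the cost of being less compact. Both arguments rest on the same underlying fact (non-negativity of the transition weights), so nothing is gained in generality, but your induction is the standard and fully rigorous way to establish monotonicity of Bellman-type operators of this form.
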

\begin{proof}
By definition \eqref{eq: normal operator} and \eqref{eq: optimal operator}, we can view the operator $T_\mu^k$ as the total expected reward collected from a $k$-stage problem with cost per stage $\tilde{W}^k(s_\mathcal{P})$. Increasing $V$ is equivalent to increasing the terminal reward (e.g., the reward collected when reaching the destination) in the $k$-stage problem. Since cost per stage is fixed, hence increasing $V$ will increase the expected total reward in the $k$-stage problem, which implies monotonicity of $T_\mu^k$.
\end{proof}
We omit the proof for Lemma \ref{lemma: monotonicity} due to space limit.

\begin{lemma}\label{lemma: convergence}
Denote the expected reward induced by proper control policy $\mu$ and adversary policy $\lambda\in\mathcal{BR}(\tilde{\mu})$ as $V_C^{\mu,\lambda}$. Then $V_C^{\mu,\lambda}$ satisfies $\lim_{M\rightarrow\infty}(T_{\mu}^MV_C)=V_C^{\mu,\lambda}$.
\end{lemma}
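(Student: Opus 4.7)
The plan is to read Lemma~\ref{lemma: convergence} as a classical value-iteration convergence statement for a finite-state stochastic shortest path (SSP) problem. Once the proper controller policy $\mu$ is fixed and one selects a worst-case stationary best response $\lambda\in\mathcal{BR}(\tilde\mu)$---which exists because $U_A$ and hence $\mathcal{BR}(\tilde\mu)$ are finite, and can be used consistently across iterations---the pair $(\mu,\lambda)$ reduces the product SG to a time-homogeneous finite Markov chain $\{S_k\}_{k\geq 0}$ on $S_\mathcal{P}$ with one-step expected reward $\tilde W(s_\mathcal{P})$, exactly as in the proof of Lemma~\ref{lemma: value bound}. By the indicator structure of $W$ in \eqref{eq: weight function}, destination states are absorbing and yield no further reward, so $V_C^{\mu,\lambda}$ is finite-valued and, by Lemma~\ref{lemma: value bound}, is a fixed point of $T_\mu$.

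The first step is to unroll the operator by induction on $M$ to obtain
\[
T_\mu^M V_C(s_\mathcal{P}) \;=\; \mathbb{E}\!\left[\sum_{k=0}^{M-1}\tilde W(S_k)+V_C(S_M)\,\Big|\,S_0=s_\mathcal{P}\right].
\]
The second step is to use properness to control the tail. Because $\mu$ is proper, from every state the induced chain reaches the absorbing destination set with positive probability; on a finite chain this upgrades to geometric decay, so there exist $C>0$ and $\rho\in(0,1)$ with $\Pr(S_M\notin\mathrm{Dest}\mid S_0=s_\mathcal{P})\leq C\rho^M$. Consequently $\sum_{k=0}^{M-1}\tilde W(S_k)$ converges in $L^1$ to the total collected reward $V_C^{\mu,\lambda}(s_\mathcal{P})$, while $\mathbb{E}[V_C(S_M)]$ splits into its (fixed) contribution from absorbed paths plus a non-absorbed residue bounded by $\|V_C\|_\infty C\rho^M\to 0$. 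Summing the two limits yields $T_\mu^M V_C\to V_C^{\mu,\lambda}$ componentwise.

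The main obstacle is rigorously justifying the tail bound and handling the $\min$ inside $T_\mu$. The cleanest route is to invoke the SSP contraction theorem of Bertsekas-Tsitsiklis directly: under a proper policy $\mu$, $T_\mu$ is a contraction in a weighted sup-norm whose unique fixed point is $V_C^{\mu,\lambda}$, so the Banach fixed-point theorem immediately yields convergence from any bounded initial $V_C$; this sidesteps the explicit geometric-decay estimate and cleanly handles the best-response choice by fixing the worst-case element of $\mathcal{BR}(\tilde\mu)$ once and for all.
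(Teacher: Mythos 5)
Your proof is correct and takes essentially the same route as the paper's: your unrolled expectation is just the probabilistic form of the paper's identity $T_\mu^M V_C = Pr^M V_C + \sum_{m=0}^{M-1}Pr^m\tilde W$, followed by the same use of properness to kill the terminal term and identify the surviving sum with $V_C^{\mu,\lambda}$. Your version only refines the argument --- making the geometric tail bound explicit and noting that the $\min$ over $\mathcal{BR}(\tilde{\mu})$ must be pinned to a single stationary $\lambda$ for the unrolling to be linear, both of which the paper leaves implicit --- though, like the paper, you should state the convention that $V_C\equiv 0$ on the absorbing destination set so the terminal term vanishes rather than contributing a fixed offset.
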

\begin{proof}

Since we focus on stationary policies, then by inducting Lemma \ref{lemma: value bound}, $T_{\mu}^MV_C$ can be represented as
\begin{equation}\label{eq: operator with power}
T_\mu^MV_C={Pr}^MV_C +\sum_{m=0}^{M-1}{Pr}^m\tilde{W},
\end{equation}
where $Pr$ is the transition matrix of the Markov chain induced by control policy $\mu$ and adversary policy $\lambda$. Since the control policy $\mu$ is proper, we can eventually reach the set of destination states with probability $1$. By definition \eqref{eq: weight function}, no reward can be collected when starting from destination states. Therefore, we have $\lim_{M\rightarrow\infty}Pr^MV_C=0$. Then, by taking limit on both sides of \eqref{eq: operator with power} as $M$ tends to infinity, we have $\lim_{m\rightarrow\infty}T_\mu^MV_C=\lim_{M\rightarrow\infty}\sum_{m=0}^{M-1}Pr^m\tilde{W}$. By the definition of $V_C^{\mu,\lambda}$, we have $\lim_{M\rightarrow\infty}(T_{\mu}^MV_C)=V_C^{\mu,\lambda}$, and hence Lemma \ref{lemma: convergence} is proved.
\end{proof}

Finally, we have the following proposition.
\begin{proposition}\label{prop: equality}
The optimal expected total reward for the controller at each iteration $k$ satisfies $V_C^k=TV_C^{k-1}$. 
\end{proposition}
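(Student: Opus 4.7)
The plan is to show that at each iteration $k$ the MILP~\eqref{eq: MILP}, fed with the previous iteration's value vectors $V_C^{k-1}$ and $V_A^{k-1}$, produces a vector $V_C^k$ whose $s_\mathcal{P}$-th entry equals $TV_C^{k-1}(s_\mathcal{P})$ from~\eqref{eq: optimal operator}. I would proceed in three steps and then address a conceptual subtlety.

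First, I would observe that because $B_C^k$ and $B_A^k$ are evaluated at the fixed vectors $V_C^{k-1},V_A^{k-1}$, every constraint of \eqref{eq: MILP} indexed by a state $s_\mathcal{P}$ involves only the local decision variables $\mu(s_\mathcal{P},\cdot)$, $\lambda(s_\mathcal{P},\cdot)$, $V_C(s_\mathcal{P})$, $V_A(s_\mathcal{P})$, with $\tilde\mu(s_\mathcal{P},\cdot)$ determined pointwise from $\mu(s_\mathcal{P},\cdot)$ via~\eqref{eq: anchoring bias}. Hence the MILP decouples into $|S_\mathcal{P}|$ independent per-state subproblems, and since $\gamma\ge 0$, maximizing $\gamma^T V_C$ is equivalent to maximizing each $V_C(s_\mathcal{P})$ separately (at states with $\gamma(s_\mathcal{P})=0$, choosing the same optimizer remains feasible without altering the objective).

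Second, at a generic state $s_\mathcal{P}$ I would unpack the big-$M$ constraints. The inequality $B_A^k(s_\mathcal{P},\mu,u_A)\le V_A(s_\mathcal{P})$ holding for every $u_A$ gives $V_A(s_\mathcal{P})\ge\max_{u_A}B_A^k(s_\mathcal{P},\mu,u_A)$. Because $\lambda(s_\mathcal{P},\cdot)\in\{0,1\}$ with $\sum_{u_A}\lambda(s_\mathcal{P},u_A)=1$ by~\eqref{eq: constraint 3}--\eqref{eq: constraint 4}, exactly one action $u_A^\star$ is selected, and the upper bound reduces to $V_A(s_\mathcal{P})\le B_A^k(s_\mathcal{P},\mu,u_A^\star)$. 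Together these force $V_A(s_\mathcal{P})=B_A^k(s_\mathcal{P},\mu,u_A^\star)=\max_{u_A}B_A^k(s_\mathcal{P},\mu,u_A)$, so $u_A^\star\in\mathcal{BR}(\tilde\mu)$. The controller's big-$M$ constraint is active only for this $u_A^\star$, yielding $V_C(s_\mathcal{P})\le B_C^k(s_\mathcal{P},\mu,u_A^\star)$.

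Third, since $\gamma(s_\mathcal{P})\ge 0$ and the only active upper bound on $V_C(s_\mathcal{P})$ is $B_C^k(s_\mathcal{P},\mu,u_A^\star)$, the optimum sets $V_C(s_\mathcal{P})=B_C^k(s_\mathcal{P},\mu,u_A^\star(\mu))$ and maximizes $\mu(s_\mathcal{P},\cdot)$ over the probability simplex subject to the best-response constraint on $u_A^\star$. Substituting the definition~\eqref{eq: BC} with $V_C^{k-1}$ in place of $V_C$, this is precisely the expression in~\eqref{eq: optimal operator} applied to $V_C^{k-1}$, giving $V_C^k(s_\mathcal{P})=TV_C^{k-1}(s_\mathcal{P})$. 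The main obstacle will be adversarial tie-breaking: the MILP picks $u_A^\star$ jointly with the leader's maximization (strong Stackelberg convention), while~\eqref{eq: optimal operator} uses $\min_{\lambda\in\mathcal{BR}(\tilde\mu)}$ (weak Stackelberg). I would resolve this by invoking the customary strong Stackelberg convention under which leader-favorable tie-breaking is assumed for both quantities, or by noting that whenever the best response is unique at the optimizing $\mu$, the $\max$ and $\min$ over $\mathcal{BR}(\tilde\mu)$ coincide and the identification is exact.
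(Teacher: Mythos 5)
Your proof is correct in substance but follows a genuinely different route from the paper's. The paper argues by contradiction directly on the optimal value: if $\bar{V}_C^k> TV_C^{k-1}$ then $\bar{V}_C^k$ is infeasible for \eqref{eq: MILP}, and if $\bar{V}_C^k< TV_C^{k-1}$ then one can move along a feasible direction to increase the objective, so $\bar{V}_C^k$ is not optimal. You instead decompose \eqref{eq: MILP} into per-state subproblems (valid because $B_C^k,B_A^k$ are frozen at the previous iterate, so no constraint couples distinct states), unpack the big-$M$ constraints to show that the selected $u_A^\star$ must satisfy $V_A(s_\mathcal{P})=\max_{u_A}B_A^k(s_\mathcal{P},\mu,u_A)$, and then identify the remaining maximization over $\mu$ with \eqref{eq: optimal operator}. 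Your version buys two things the paper's proof lacks. First, the paper's dichotomy ``$\bar{V}_C^k>TV_C^{k-1}$ or $\bar{V}_C^k<TV_C^{k-1}$'' is not exhaustive for vectors, and its infeasibility claim in the first case is precisely the per-state big-$M$ computation you carry out explicitly; your decomposition also correctly handles components with $\gamma(s_\mathcal{P})=0$, where the MILP does not pin down $V_C(s_\mathcal{P})$ uniquely. Second, you surface a real mismatch the paper silently ignores: the MILP optimizes $\lambda$ jointly with $\mu$ over $\mathcal{BR}(\tilde{\mu})$ (strong/optimistic Stackelberg), whereas $T$ in \eqref{eq: optimal operator} takes $\min_{\lambda\in\mathcal{BR}(\tilde{\mu})}$ (weak/pessimistic); when the best response is not unique at the optimizing $\mu$, the MILP value can strictly exceed $TV_C^{k-1}$ as literally defined, so the paper's claim that $\bar{V}_C^k>TV_C^{k-1}$ implies infeasibility fails without the leader-favorable tie-breaking convention you invoke. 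Stating that convention (or a uniqueness assumption on the best response) is the right fix, and it is needed by the paper's argument just as much as by yours.
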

\begin{proof}
Suppose the expected reward for the controller is $\bar{V}_C^k$ at some iteration $k$ such that $\bar{V}_C^k\neq TV_C^{k-1}$. If $\bar{V}_C^k> TV_C^{k-1}$, we have that $\bar{V}_C^k$ is not a feasible solution to MILP \eqref{eq: MILP}. If $\bar{V}_C^k< TV_C^{k-1}$, then starting from $\bar{V}_C^k$, we can always search along some direction in the feasible region of \eqref{eq: MILP} until we reach the boundary of the feasible region to find some $\hat{V}_C^k\geq \bar{V}_C^k$. Hence, $\bar{V}_C^k$ is not the optimal solution to \eqref{eq: MILP}. Therefore, we have $V_C^k=TV_C^{k-1}$ holds.  
\end{proof}
In the following, we present the proof of Theorem \ref{theorem: convergence}.
\begin{proof}
\emph{(Proof of Theorem \ref{theorem: convergence}.)} We show that Algorithm \ref{alg: HVI} terminates within finite iterations because both outer and inner loops terminate within finite iterations. 

First, the outer loop executes exactly $|\mathcal{H}|$ times and thus the outer loop terminates within finite iterations. 

Next, we show at each outer loop iteration $t$, the value iteration module converges within finite time. It is obvious that the inner loop terminates when the initial guess on $V_C$ is not feasible. In the following we focus on the feasible case. Let $k$ be the iteration index of value iteration (line $3$ to line $7$). Let us denote the expected reward of the controller induced by control policy $\mu^k$ and adversary policy $\lambda^k\in\mathcal{BR}(\tilde{\mu^k})$ at each iteration $k$ as $V_C^k$. Let the expected reward of each transition starting from state $s_\mathcal{P}$ and the transition matrix under control policy $\mu^k$ and adversary policy $\lambda^k$ be $\tilde{W}^k(s_\mathcal{P})=\sum_{u_C}\sum_{u_A}\sum_{s_\mathcal{P}^\prime}\mu^k(s_\mathcal{P},u_C)\lambda^k(s_\mathcal{P},u_A)W(s_\mathcal{P},u_C,u_A,s_\mathcal{P}^\prime)$ and $Pr^k(s_\mathcal{P},s_\mathcal{P}^\prime)=\sum_{u_C}\mu^k(s_\mathcal{P},u_C)\sum_{u_A}\lambda^k(s_\mathcal{P},u_A)\allowbreak Pr(s_\mathcal{P},u_C,u_A,s_\mathcal{P}^\prime)$, respectively. By Lemma \ref{lemma: value bound} and Proposition \ref{prop: equality}, we observe that $V_C^{k+1}=TV_C^k$ is equivalent to find a control policy $\mu^{k+1}$ such that $T_{\mu^{k+1}}V_C^{k}=TV_C^k$. Therefore $V_C^k=T_{\mu^k}V_C^k=\tilde{W}^k+Pr^kV_C^k\leq \tilde{W}^{k+1}+Pr^{k+1}V_C^k= T_{\mu^{k+1}}V_C^k$, where the inequality holds by definition \eqref{eq: normal operator} and \eqref{eq: optimal operator}, i.e., $T_{\mu^k}V_C^k\leq TV_C^k$. View $V_C^k$ as $T_{\mu^{k+1}}^0V_C^k$. Then composing $T_{\mu^{k+1}}$ $m$ times and taking the limit as $m\rightarrow\infty$, by Lemma \ref{lemma: monotonicity}, we can construct a sequence of inequalities $V_C^k\leq T_{\mu^{k+1}}V_C^k,T_{\mu^{k+1}}V_C^k\leq T_{\mu^{k+1}}^2V_C^k,\cdots,T_{\mu^{k+1}}^{m-1}V_C^k\leq T_{\mu^{k+1}}^{m}V_C^k$. Therefore, we have $V_C^k\leq\lim_{m\rightarrow\infty}T_{\mu^{k+1}}^mV_C^k=V_C^{k+1}$, where the convergence of $T_\mu^m$ follows from Lemma \ref{lemma: convergence}. Hence, the expected reward increases with respect to the number of iterations $k$. Since $V_C$ is upper bounded by $\sum_{\phi\in\Phi}r(\phi)$, we claim that the value iteration module converges within finite time.
\end{proof}

\begin{figure*}[t!]
\centering
                 \begin{subfigure}{.65\columnwidth}
                 \includegraphics[width=\columnwidth]{./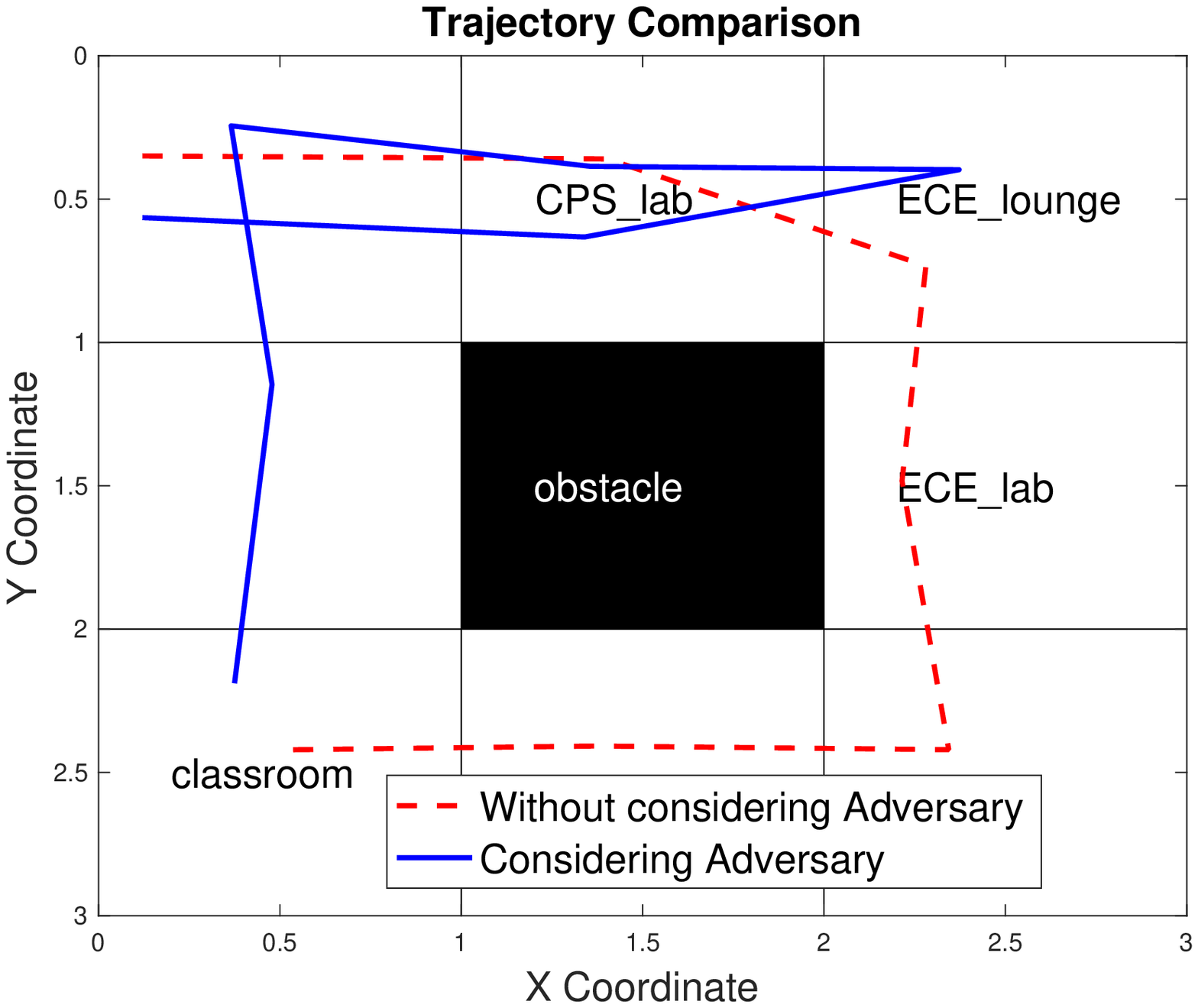}
                 \subcaption {}
                 \label{fig:trajectory}
                 \end{subfigure}\hfill
                 \begin{subfigure}{.65\columnwidth}
                 \includegraphics[width=\columnwidth]{./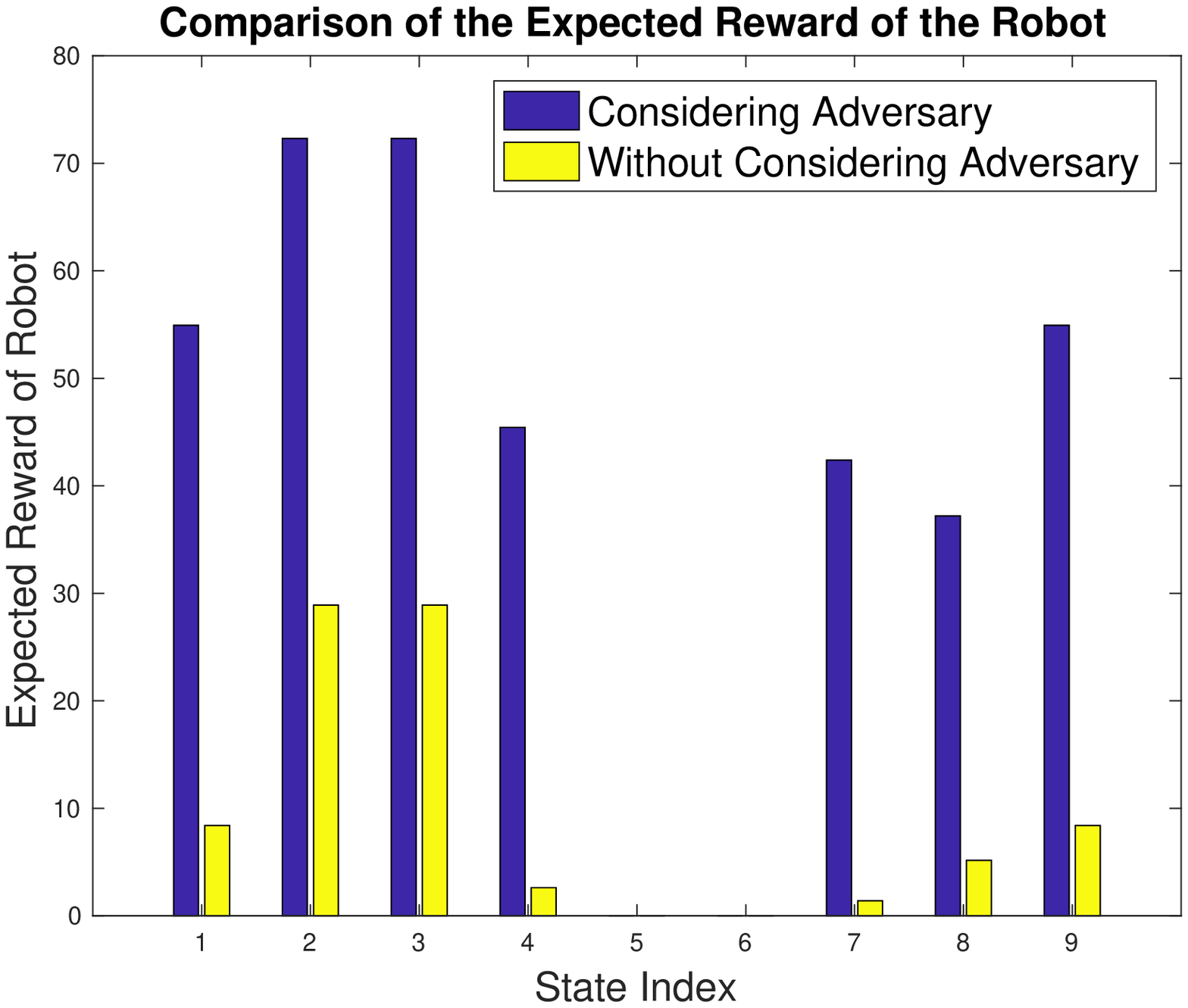}
                 \subcaption {}
                 \label{fig:improve}
                 \end{subfigure}\hfill
                 \begin{subfigure}{.65\columnwidth}
                 \includegraphics[width=\columnwidth]{./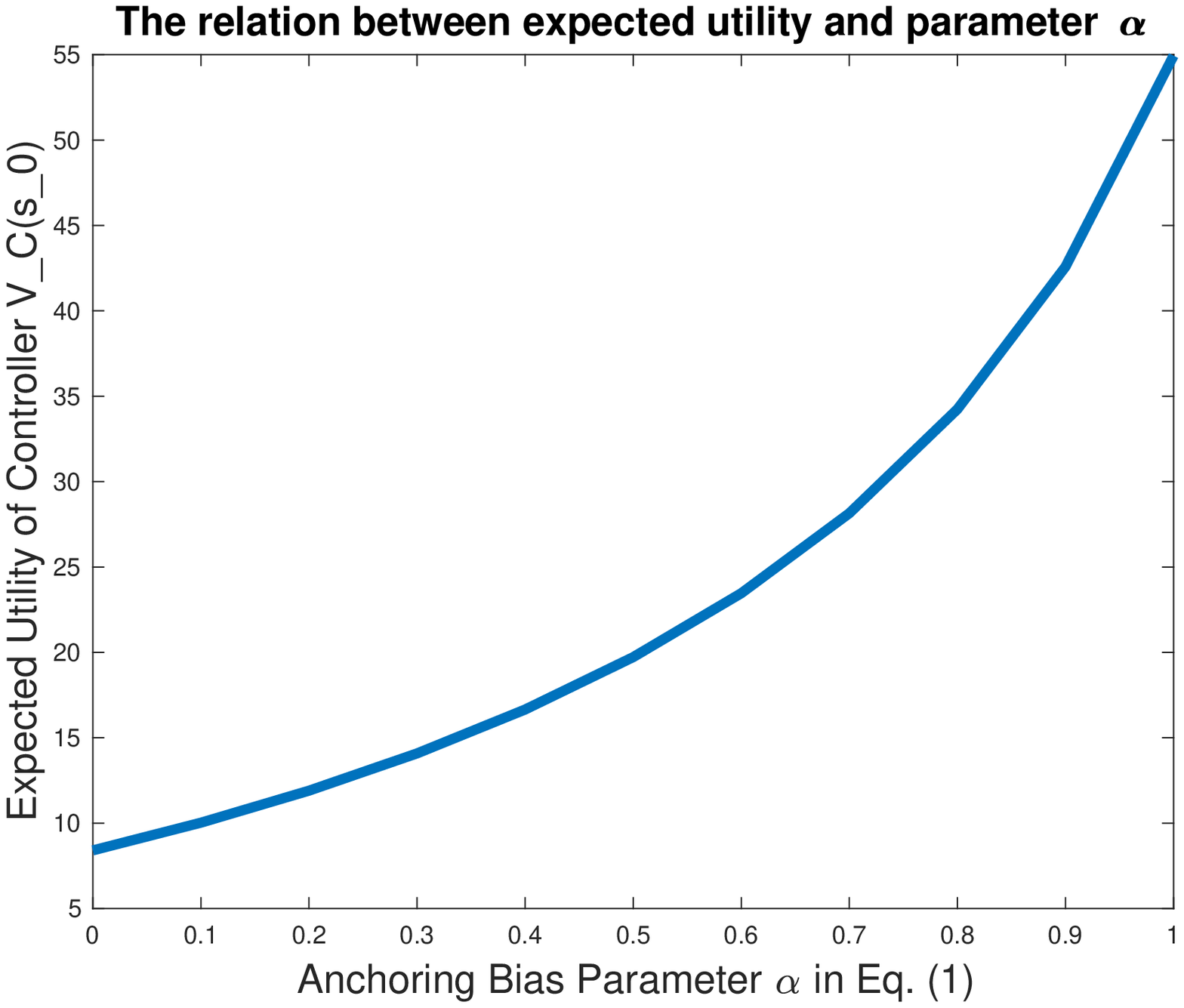}
                 \subcaption{}
                 \label{fig:observation}
                 \end{subfigure}\hfill
\caption{Fig. \ref{fig:trajectory} shows the comparison of the trajectories. The trajectory in solid line is generated using the proposed approach. The trajectory in dotted line is generated without considering the presence of adversary. Fig. \ref{fig:improve} shows the comparison of expected rewards obtained using different approaches. The blue bars are generated using the proposed approach. The yellow bars are generated without considering the presence of adversary. Fig. \ref{fig:observation} shows the relationship between controller's expected reward and anchoring bias parameter.}
\end{figure*}

Furthermore, we characterize the value function returned by Algorithm \ref{alg: HVI} using the following proposition.
\begin{proposition}\label{prop: optimal value}
The expected reward of the controller returned by Algorithm \ref{alg: HVI} is the value function obtained by committing to control strategy $\mu$ returned by Algorithm \ref{alg: HVI}.
\end{proposition}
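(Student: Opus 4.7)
The plan is to show that the $V_C$ returned by Algorithm \ref{alg: HVI} equals the expected reward $V_C^{\mu,\lambda}$ actually accrued when the controller commits to the returned policy $\mu$ and the adversary plays $\lambda \in \mathcal{BR}(\tilde{\mu})$. The natural route is to prove that $V_C$ is a fixed point of $T_\mu$ for the returned $\mu$, and then invoke Lemma \ref{lemma: convergence} to identify this fixed point with $V_C^{\mu,\lambda}$.

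First, I would analyze the state of the inner loop at termination. With the convergence tolerance taken as $\epsilon=0$, the stopping criterion $\gamma^T V_C^k - \gamma^T V_C^{k-1} \leq 0$ combined with monotonicity established in the proof of Theorem \ref{theorem: convergence} (which shows $V_C^{k+1} \geq V_C^k$ componentwise) forces $V_C^k = V_C^{k-1}$. By Proposition \ref{prop: equality} we have $V_C^k = T V_C^{k-1}$, so the returned vector satisfies $V_C = T V_C$. Moreover, the returned policy $\mu$ is a maximizer inside the MILP \eqref{eq: MILP} at the last iteration, and by inspection of \eqref{eq: normal operator}--\eqref{eq: optimal operator} this maximizer realizes the outer $\max_\mu$ in $T$; hence $T_\mu V_C = T V_C = V_C$.

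Second, I would iterate this fixed-point identity. A straightforward induction on $M$ gives $T_\mu^M V_C = V_C$ for every $M \geq 0$. Because the returned $\mu$ is proper (see below), Lemma \ref{lemma: convergence} applies and yields
\begin{equation*}
V_C \;=\; \lim_{M \to \infty} T_\mu^M V_C \;=\; V_C^{\mu,\lambda},
\end{equation*}
with $\lambda \in \mathcal{BR}(\tilde{\mu})$, which is exactly the claim.

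The main obstacles are two technical points that must be discharged. (i) Reconciling the $\epsilon$-tolerance of the stopping rule with the exact fixed-point equation: I would argue that for sufficiently small $\epsilon$, the gap $\|V_C^k - T V_C^{k-1}\|$ is $O(\epsilon)$ by Lipschitz continuity of $T_\mu$ along the iterates, so the same identification holds up to an $O(\epsilon)$ error and the proposition follows by letting $\epsilon \downarrow 0$; alternatively, one can observe that the value iteration lives in a finite set of policy-induced value vectors (the policy space is a product of simplices and the MILP selects among extreme points), so strict monotonicity implies termination at an exact fixed point. (ii) Justifying that the returned $\mu$ is proper, which is needed to invoke Lemma \ref{lemma: convergence}: Proposition \ref{prop: sufficiency of proper policy} lets us restrict attention to proper policies, and the MILP \eqref{eq: MILP} is constructed on the product SG with destination states treated as absorbing and yielding no further reward, so any improper $\mu$ would put value mass on states unreachable from the destination set and could be strictly improved by rerouting toward $Acc$, contradicting optimality at the terminal iteration.
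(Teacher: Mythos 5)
The paper states Proposition \ref{prop: optimal value} with no proof at all, so there is no authorial argument to compare yours against; your fixed-point route is the natural one and is built entirely from the machinery the paper already provides (Lemma \ref{lemma: value bound}, Lemma \ref{lemma: convergence}, Proposition \ref{prop: equality}, and the componentwise monotonicity $V_C^k\leq V_C^{k+1}$ established inside the proof of Theorem \ref{theorem: convergence}). The core chain --- termination forces $V_C^k=V_C^{k-1}$, Proposition \ref{prop: equality} gives $V_C=TV_C$, the maximizing policy of the final MILP gives $T_\mu V_C=TV_C=V_C$, and Lemma \ref{lemma: convergence} identifies the fixed point of $T_\mu$ with $V_C^{\mu,\lambda}$ --- is sound and is almost certainly what the authors intend.

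Two of your patches deserve scrutiny. First, the stopping rule only yields $\gamma^T V_C^k=\gamma^T V_C^{k-1}$ (even at $\epsilon=0$); combined with componentwise monotonicity this forces $V_C^k=V_C^{k-1}$ only on the support of $\gamma$. If $\gamma$ is a point mass at the initial state, which is what ``initial distribution'' suggests in the case study, you do not obtain the componentwise identity $V_C=T_\mu V_C$ that is needed to iterate $T_\mu$ and invoke Lemma \ref{lemma: convergence}; you would need either full support of $\gamma$ or a separate argument on the reachable states. Second, your fallback claim that the iteration ``lives in a finite set of policy-induced value vectors'' is not justified: the feasible region of MILP \eqref{eq: MILP} changes with $V_C^{k-1}$ through $B_C^k$, so the vertices visited across iterations are not drawn from a fixed finite set, and the $O(\epsilon)$ version of the statement is the only one you can actually defend. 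Finally, the properness of the returned policy, which Lemma \ref{lemma: convergence} requires and which you flag, is asserted but never established in the paper either, so your heuristic justification is no weaker than the source --- but it remains a gap in both.
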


The advantage of Algorithm \ref{alg: HVI} is that it significantly reduces computation and memory cost comparing to global optimization techniques \cite{burer2012non} and discretization-based approximate algorithms \cite{vorobeychik2012computing}. Global optimization techniques, for example, spatial branch and bound has been demonstrated non-efficient comparing to MILP. The approximate solution proposed in \cite{vorobeychik2012computing} introduces extra binary variables and constraints, whose sizes are linear to the discretization resolution. The introduction of extra variables and constraints weakens its scalability, especially for the large state space in product SG. In contrast, Algorithm \ref{alg: HVI} introduces no additional variables when solving the MILP. Therefore, Algorithm \ref{alg: HVI} significantly saves memory and model construction time for commercial solvers. Algorithm \ref{alg: HVI} does not guarantee that a global optimal solution will be found. Hence, executing Algorithm \ref{alg: HVI} from different initial points can improve the performance of Algorithm \ref{alg: HVI}.


\section{Case Study}\label{sec: simulation}

In this section, we present a numerical case study to demonstrate the proposed approach.

\subsection{Case Study Settings}

Suppose a robot is performing tasks modeled in scLTL in a bounded environment. We consider the robot following standard discrete time model $x(t+1) = x(t) + \left(u_{C}(t) + u_{A}(t)+\vartheta(t)\right)\Delta t$, where $x(t)\subset\mathbb{R}^2$ is the location of the robot at time $t$, $u_C(t)\in\mathcal{U}\subset\mathbb{R}^2$ is the control input from the controller, $u_A(t)\in\mathcal{A}\subset\mathbb{R}^3$ is the input signal from the adversary and $\vartheta(t)\subset\mathbb{R}^2$ is the stochastic disturbance, $\Delta t=t_{k+1}-t_k$ is the time interval. Therefore, we have that the control signal of the robot is compromised by the adversary. Here we let $\mathcal{A}\subset\mathcal{U}$. 

We divide the region into $9$ sub-regions with each size is $1m\times1m$. We abstract the stochastic game as follows \cite{niu2018Secure}. Let each sub-region be a state in the stochastic game. Hence, the stochastic game has $9$ states and we will refer to state and sub-region interchangeably in the following. Each state can be mapped to a subset of atomic propositions by labeling function $\mathcal{L}$ as shown in Fig. \ref{fig:trajectory}. The action sets for the controller and adversary are defined as $U_C=U_A=\{N,S,W,E\}$, implying moving towards the adjacent sub-region. When the adversary compromises the control input, the probability that the robot transits to its intended state is $0.6$. Moreover, when the robot is at $ECE\_lab$, the adversary can block all the transitions of the robot (e.g., close the door of the room).

\begin{table}[b!]
\centering
\begin{tabular}{|c|c|}
  \hline
   formula & $r(\phi_i)$ \\
   \hline
  $G\neg obstacle\land F(CPS\_lab\land(F ECE\_lab\land F classroom))$ & 50 \\
  \hline
  $G\neg obstacle\land F(CPS\_lab\land F classroom)$ & 20 \\
  \hline
  $G\neg obstacle\land F(ECE\_lab\land F classroom)$ & 20 \\
  \hline
  $CPS\_lab\implies F ECE\_lounge$ & 10\\
  \hline
\end{tabular}
\caption{Specifications given to the robot. The specifications are indexed from $\phi_1$ to $\phi_4$ from top to bottom.}
\label{table: specification}
\end{table}


Suppose the robot is given $4$ specifications $\Phi=\{\phi_1,\phi_2,\phi_3,\phi_4\}$ as shown in Table \ref{table: specification}. The robot is required to visit the $CPS\_lab$ or $ECE\_lab$ before visiting $classroom$. Moreover, they are required to be visited in this particular order if possible. In the meantime, the robot should avoid $obstacle$ during the visit to guarantee safety property. Finally, the robot is required to eventually visit $ECE\_lounge$ once it has visited $CPS\_lab$.


\subsection{Case Study Results}
Let the upper left state in Fig. \ref{fig:trajectory} be the initial state. Fig. \ref{fig:trajectory} shows two trajectories generated using the proposed approach and the control policy synthesized without considering the presence of the adversary. Without considering the adversary, the control policy attempts to satisfy all the specifications in $\Phi$. However, the adversary is capable to block all the transitions at state marked as $ECE\_lab$. Therefore, following this policy can only satisfy specification $\phi_4$. Our proposed approach 
takes the potential impacts from the adversary into consideration. By using the proposed approach, specifications $\phi_2$ and $\phi_4$ are satisfied. Hence, the robot can obtain higher reward by using the proposed approach. The increment of the robot's expected reward achieved using our proposed approach is shown in Fig. \ref{fig:improve}.

In Fig. \ref{fig:observation}, we investigate the relationship between observation capability of the adversary and expected reward of the controller. We vary $\alpha$ in \eqref{eq: anchoring bias} from $0$ to $1$. When $\alpha=0$, the adversary has unlimited observation capability and it has perfect knowledge of the controller's strategy. When $\alpha=1$, the adversary makes no observation over the controller's strategy and it assumes the adversary plays uniform strategy. From Fig. \ref{fig:observation}, we observe that the expected reward of the controller increases with respect to the reduction of adversary's observation capability. Hence the more observations the adversary makes, the lower expected reward the controller obtains.
\section{conclusion}\label{sec: conclusion}

In this paper, we have investigated minimum violation problem on stochastic system in the presence of an adversary. The system is given a set of specifications modeled in scLTL. We model the interaction between the controller and adversary using a stochastic Stackelberg game. Moreover, to model the behavior of human adversaries, we consider anchoring bias. We rely on the concept of Stackelberg equilibrium to synthesize a control strategy. An efficient heristic algorithm is proposed to compute the control policy. We show the proposed algorithm converges in finite time and demonstrate the proposed approach using a numerical case study.

\bibliographystyle{IEEEtran}
\bibliography{IEEEabrv,MyBib}

\end{document}